\newcommand*{\installoption}[2][llncs]{
  \expandafter\newif\csname if#2\endcsname
  \@ifclasswith{#1}{#2}{
    \csname#2true\endcsname
  }{}
}
\newcommand*{\obeyoption}[2]{
  \csname if#2\endcsname\else\csname #1false\endcsname\fi
}
\newcommand*{\excludeoption}[2]{
  \csname if#2\endcsname\csname #1false\endcsname\fi
}
\newcommand{\IfDefTF}[3]{\@ifundefined{#1}{#3}{#2}}
\newcommand{\IfDefT}[2]{\IfDefTF{#1}{#2}{}}
\newcommand{\IfDefF}[2]{\IfDefTF{#1}{}{#2}}
\newcommand{\IfLabelExistsTF}[3]{\IfDefTF{r@#1}{#3}{#2}}
\newcommand{\IfLabelExistsT}[2]{\IfLabelExistsTF{#1}{#2}{}}
\newcommand{\IfLabelExistsF}[2]{\IfLabelExistsTF{#1}{}{#2}}
\tikzset{
	place/.style={
		circle,
		thick,
		draw=blue!75,
		fill=blue!20,
		minimum size=6mm
	},
	transition/.style={
		rectangle,
		thick,
		fill=black,
		minimum width=8mm,
		inner ysep=2pt
	}
}
\newcommand{\customlabel}[4][0]{%
  \protected@write\@auxout{}{\string\newlabel{#3}{{#4}{\thepage}{#4}{#3}{}}}%
  \protected@write\@auxout{}{\string\newlabel{#3@cref}{{[#2][#1][#1]#4}{\thepage}}}%
}
\newcommand{\crefv}[1]{%
  \begingroup\@cref@compressfalse\@cref@sortfalse\cref{#1}\endgroup%
}
\newcommand{\Crefv}[1]{%
  \begingroup\@cref@compressfalse\@cref@sortfalse\Cref{#1}\endgroup%
}
\newcommand{\crefabbrev}[1]{%
  \begingroup\@cref@abbrevtrue\cref{#1}\endgroup%
}
\NewDocumentCommand\declarecrefname{s m m m m m}{
  \Crefname{#2}{#5}{#6}
  \IfBooleanTF{#1}{
    \crefname{#2}{#3}{#4}
  }{
    \if@cref@capitalise
      \crefname{#2}{#5}{#6}
    \else
      \crefname{#2}{#3}{#4}
    \fi
  }
}
\newcommand{\rname}[1]{%
 	\text{\scshape #1}%
}
\newcommand{\rlabel}[2]{{%
  \customlabel{rule}{#2}{#1}%
  \hypertarget{#2}{\scriptstyle #1}}}
\crefname{rule}{rule}{rules}
\Crefname{rule}{Rule}{Rules}
\spnewtheorem*{convention}{Convention}{\bfseries}{\normalfont}
\spnewtheorem*{notation}{Notation}{\bfseries}{\normalfont}
\spnewtheorem{setting}{Setting}{\bfseries}{\normalfont}
\crefname{proof}{proof of}{proofs of}
\Crefname{proof}{Proof of}{Proofs of}
\newcommand{\defeq}{\triangleq}
\DeclareMathOperator{\dom}{dom}
\newcommand{\textcode}[1]{\textnormal{\ttfamily#1}}
\newcommand{\type}[1]{\textcode{#1}\xspace}
\newcommand{\libOTM}{\textnormal{\textsf{O\kern-1ptT\kern-.5ptM}}\xspace}
\newcommand{\libSTM}{\textnormal{\textsf{S\kern-.5ptT\kern-.5ptM}}\xspace}
\newcommand{\lthrparenthesis}{(\mspace{-3.7mu}[}
\newcommand{\rthrparenthesis}{]\mspace{-3.7mu})}
\newcommand{\ctmthread}[2][t]{\lthrparenthesis#2\rthrparenthesis_{#1}}
\newcommand{\ctmthreadtr}[4]{\lthrparenthesis{#1;#2}\rthrparenthesis_{#3,#4}}
\newcommand{\ctxhole}[1][-]{[#1]}
\newcommand{\ctxP}[2][t]{\mathbb{P}_{#1}\ctxhole[#2]}
\newcommand{\ctxT}[2][t,k]{\mathbb{T}_{#1}\ctxhole[#2]}
\newcommand{\Loc}{\textsf{Loc}}
\newcommand{\Var}{\textsf{Var}}
\newcommand{\Term}{\textsf{Term}}
\newcommand{\TrName}{\textsf{TrName}}
\newcommand{\State}{\textsf{State}}
\newcommand{\ctmSt}[2]{\langle #2; #1\rangle}
\begin{document}

\mainmatter

\title{Software\;Transactional\;Memory\;with\;Interactions\thanks{Supported by Italian MIUR project PRIN 2017FTXR7S \emph{IT MATTERS} (Methods and Tools for Trustworthy Smart Systems) (M.~Miculan), and by the Independent Research Fund Denmark, Natural Sciences, grant {DFF-7014-00041} (M.~Peressotti).}}
\subtitle{Extended Version}

\author{
	Marino Miculan\inst{1}
	\and
	Marco Peressotti\inst{2}}
\authorrunning{M.~Miculan and M.~Peressotti}

\institute{
	University of Udine 
	\email{marino.miculan@uniud.it} 
	\and
	University of Southern Denmark
	\email{peressotti@imada.sdu.dk}
}

\toctitle{Software Transactional Memory with Interactions} 

\maketitle

\begin{abstract}
	Software Transactional memory (STM) is an emerging abstraction for concurrent programming alternative to lock-based synchronizations. 
	Most STM models admit only \emph{isolated} transactions, which are not adequate in multithreaded programming where transactions need to interact via shared data \emph{before} committing.
	To overcome this limitation, in this paper we present \emph{Open Transactional Memory} (OTM), a programming abstraction supporting \emph{safe, data-driven interactions} between \emph{composable} memory transactions.
	This is achieved by relaxing isolation between transactions, still ensuring atomicity.
   This model allows for \emph{loosely-coupled} interactions since transaction merging is driven only by accesses to shared data, with no need to specify participants beforehand.
\end{abstract}

\section{Introduction}
\label{sec:introduction}

Modern multicore architectures have emphasized the importance of abstractions supporting correct and scalable multi-threaded programming.   In this model, threads can collaborate by interacting on shared data structures, such as tables, message queues, buffers, etc., whose access must be regulated to avoid inconsistencies.
Traditional lock-based mechanisms (like semaphores and monitors) 
are notoriously difficult and error-prone, as they easily lead to deadlocks, race conditions and priority inversions; moreover, they are not composable and hinder parallelism, thus reducing efficiency and scalability.
\emph{Transactional memory} (TM) has emerged as a promising abstraction to replace locks \cite{moss:transactionalmemorybook,st:dc1997}.
The basic idea is to mark blocks of code as \emph{atomic}; then, execution of each block will appear either as if it was executed sequentially and instantaneously at some unique point in time, or, if aborted, as if it did not execute at all.
This is obtained by means of \emph{optimistic} executions: these blocks are allowed to run concurrently, and eventually if an interference is detected a block is automatically restarted after that its effects are rolled back.
Thus, each transaction can be viewed in isolation as a \emph{single-threaded} computation, significantly reducing the programmer's burden. Moreover, transactions are composable and ensure absence of deadlocks and  priority inversions, automatic roll-back on exceptions, and increased concurrency. 

However, in multi-threaded programming transactions may need to interact and exchange data \emph{before} committing. 
In this situation, transaction isolation is a severe shortcoming.  A simple example is a request-response interaction between two transactions via a shared buffer.  We could try to synchronize the threads accessing the buffer \type{b} by means of two semaphores \verb|c1|, \verb|c2| as follows: 
\\[\abovedisplayshortskip]
\begin{minipage}[t]{.45\textwidth}
\begin{Verbatim}[baseline=t,fontsize=\small]
// Party1 (Master)
atomically {
  <put request in b>
  up(c1);
  <some other code; may abort>
  down(c2); // wait for answer
  <get answer from b; may abort>
}
\end{Verbatim}
\end{minipage}\hfill%
\begin{minipage}[t]{.45\textwidth}
\begin{Verbatim}[baseline=t,fontsize=\small]
// Party2 (Worker)
atomically {
  down(c1); // wait for data
  <get request from b>
  <compute answer; may abort>
  <put answer in b>
  up(c2); 
}
\end{Verbatim}
\end{minipage}
\\[\belowdisplayshortskip]
Unfortunately, this solution does not work: any admissible execution requires an interleaved scheduling between the two transactions, thus violating isolation; hence, the transactions deadlock as none of them can progress.  It is important to notice that this deadlock arises because interaction occurs between threads of \emph{different} transactions; 
in fact, the solution above is perfectly fine for threads outside transactions or within the same transaction.

To overcome this limitation, in this paper we propose a programming model for \emph{safe, data-driven} interactions between memory transactions.  The key observation is that \emph{atomicity} and \emph{isolation} are two disjoint computational aspects:
\begin{itemize}
	\item an \emph{atomic non-isolated} block is executed ``all-or-nothing'', but its execution can overlap others' and \emph{uncontrolled} access to shared data is allowed;
	\item a \emph{non-atomic isolated} block is executed ``as if it were the only one'' (i.e., in mutual exclusion with others), but no rollback on errors is provided.
\end{itemize} 
Thus, a ``normal'' block of code is neither atomic nor isolated; a mutex block (like Java \emph{synchronized} methods) is isolated but not atomic; and a usual STM transaction is a block which is both atomic and isolated.  Our claim is that \emph{atomic non-isolated blocks can be fruitfully used for implementing safe composable interacting memory transactions}---henceforth called \emph{open transactions}.

In this model, a transaction is composed by several threads, called \emph{participants}, which can cooperate on shared data.  A transaction commits when all its participants commit, and aborts if any thread aborts.  Threads participating to different transactions can access to shared data, but when this happens the transactions are \emph{transparently merged} into a single one.  For instance, the two transactions of the synchronization example above would automatically merge becoming the same transaction, so that the two threads can synchronize and proceed.  Thus, this model relaxes the isolation requirement still guaranteeing atomicity and consistency; moreover, it allows for \emph{loosely-coupled} interactions since transaction merging is driven only by run-time accesses to shared data, without any explicit coordination among the participants beforehand.

\paragraph{Related work.}
Many authors have proposed mechanisms to allow transactions to interact. 
Perhaps the work closest to ours are \emph{transaction communicators} (TC) \cite{lv:ppopp11}. A transaction communicator is a (Java) object which can be accessed simultaneously by many transactions.  To guarantee consistency, dependencies between transactions are maintained at runtime: a transaction can commit only when every transactions it depends on also commit.  When dependencies form a cycle, the involved transactions must either all commit or abort together.
This differs from OTM approach, where cooperating transactions are dynamically merged and hence the dependency graph is always acyclic; thus, OTM is opaque whereas TC is not.
Other differences between TC and OTM are that our model has a formal semantics and that it can be implemented without changing neither the compiler nor the runtime (albeit it may be not very efficient).

Other authors have proposed  \emph{events}- and \emph{message passing}-based mechanisms; we mention \emph{transactional events} (TE) \cite{df:jfp08}, which are specialized to the composition of send/receive operations to simplify synchronization in communication protocols, and TIC \cite{skby:oopsla07}, where a transaction can be split into an isolated and a non-isolated transactions; this may violate local invariants and hence TIC does not satisfy opacity.
Finally,  \emph{communicating memory transactions} (CMT) \cite{lp:ppopp2011} is a model combining memory transactions with the actor model yielding \emph{transactors};
hence CMT can be seen as the message-oriented counterpart of TC.  CMT is opaque and has an efficient implementation; however
it is best suited to distributed scenarios, whereas TC and \libOTM are aimed to multi-threaded programming on shared memory---in fact, transactors can be easily implemented in \libOTM by means of queues on shared memory.  Another difference is that channel topology among transactors is established \emph{a priori}, i.e.~when the threads are created, while in \libOTM threads are created at runtime and interactions between transactions are driven by access to shared data only, whose references can be acquired at runtime.

Despite the name, our open transactions do not have much to share with \emph{open nested transactions} \cite{nietal:ppopp07}.  The latter work is about enabling physically conflicting executions of transactions, still maintaining isolation from the programmer's point of view; hence, open nested transactions cannot actually interact. 

\paragraph{Synopsis.}
In \cref{sec:monads} we present \emph{Open Transactional Memory} in the context of Concurrent Haskell.
In \cref{sec:semantics} we provide a formal operational semantics
which is used in \cref{sec:opacity} to prove that OTM satisfies the \emph{opacity} correctness criterion.
Concluding remarks and directions for future work are in \cref{sec:conclusions}.

\section{Background: Concurrency Control in Haskell}\label{sec:background}
In this paper we focus on \emph{internal concurrency}, i.e.~multiple threads in a single process cooperating through shared memory.
The dominant technique is lock-based programming
which can quickly become unmanageable as
interactions grow in complexity. In the last decade, transactional memory
has seen increasing adoption as an alternative to locks. 

In this section we briefly recall these approaches in the context of Haskell.

\subsection{Concurrent Haskell}
Haskell was born as pure lazy functional language;
side effects are handled by means of monads
\cite{pw:popl1993}.
For instance, I/O actions have type \type{IO\;a} and can be combined 
together by the monadic bind combinator \textcode{>>=}.
Therefore, the function
\textcode{putChar :: Char -> IO ()} takes a character
and delivers an I/O action that, when performed (even multiple times),
prints the given character.
Besides external inputs/outputs, values of \type{IO}
include operations with side effects on mutable (typed) cells.
A cell holding values of type \type{a}
has type \type{IORef\;a} and may be dealt with only via the following operations:
\begin{Verbatim}[fontsize=\small, tabsize=2, xleftmargin=2ex, gobble=1]
	newIORef   :: a -> IO (IORef a)
	readIORef  :: IORef a -> IO a
	writeIORef :: IORef a -> a -> IO ()
\end{Verbatim}
Concurrent Haskell \cite{pgf:popl1996}
adds support to threads which independently
perform a given I/O action as explained by
the type of the thread creation function:
\begin{Verbatim}[fontsize=\small, tabsize=2, xleftmargin=2ex, gobble=1]
	forkIO :: IO () -> IO ThreadId
\end{Verbatim}
The main mechanism for safe thread communication and synchronisation
are \emph{MVars}. A value of type \type{MVar\;a} is mutable location
(as for \type{IORef\;a}) that is either empty or full with a value of 
type \type{a}. There are two fundamental primitives to interact
with MVars:
\begin{Verbatim}[fontsize=\small, tabsize=2, xleftmargin=2ex, gobble=1]
	takeMVar :: Mvar a -> IO a
	putMvar  :: Mvar a -> a -> IO ()
\end{Verbatim}
The first empties a full location and blocks otherwise
whereas the second fills an empty location and blocks otherwise.
Therefore, MVars can be seen as one-place channels and
the particular case of \type{MVar\;()} corresponds to binary semaphores.

We refer the reader to \cite{jones:2010awkward-squad} for an 
introduction to concurrency, I/O, exceptions, and
cross language interfacing (the ``awkward squad''). 

\subsection{STM Haskell}
STM Haskell \cite{hmpm:ppopp2005} builds on Concurrent Haskell
adding \emph{transactional actions} and a transactional memory for safe 
thread communication, called \emph{transactional variables} or \emph{TVars} for short.

Transactional actions have type \type{STM\;a}
and are concatenated using \type{STM} monadic ``bind'' combinator, akin I/O actions.
A transactional action remains tentative during its execution;
(its effect) is exposed to the rest of the system by
\begin{Verbatim}[fontsize=\small, tabsize=2, xleftmargin=2ex, gobble=1]
	atomically :: STM a -> IO a
\end{Verbatim}
which takes an STM action and delivers an I/O action that,
when performed, runs the transaction guaranteeing
atomicity and isolation with respect to the rest of the
system.

Transactional variables have type \type{TVar\;a} where
\type{a} is the type of the value held and, like IOrefs,
are manipulated via the interface:
\begin{Verbatim}[fontsize=\small, tabsize=2, xleftmargin=2ex, gobble=1]
	newTVar   :: a -> STM (TVar a)
	readTVar  :: TVar a -> STM a
	writeTVar :: TVar a -> a -> STM ()
\end{Verbatim}
For instance, the following code uses monadic bind
to combine a read and write operation on a transactional
variable and define a ``transactional update'':
\begin{Verbatim}[fontsize=\small, tabsize=2, xleftmargin=2ex, gobble=1]
	modifyTVar :: TVar a -> (a -> a) -> STM ()
	modifyTVar var f = do
		x <- readTVar var
		writeOTVar var (f x) 
\end{Verbatim}
Then, \textcode{atomically\;(modifyTVar\;x\;f)} delivers 
an I/O action that applies \textcode{f} to 
the value held by \textcode{x} and updates \textcode{x}
accordingly---the two steps being executed as a single atomic isolated operation.

The primitives recalled so far cover memory interaction,
but STM allows also for \emph{composable blocking}. 
In STM Haskell, blocking translates in 
``this thread has been scheduled too early, i.e., the right conditions are not fulfilled (yet)''. The programmer can tell the scheduler about this fact by means of 
the primitive: 
\begin{Verbatim}[fontsize=\small, tabsize=2, xleftmargin=2ex, gobble=1]
	retry :: STM a
\end{Verbatim}
The semantics of \textcode{retry} is to abort the transaction 
and re-run it after at least one of the transactional variables
it has read from has been updated---there is no point in
blindly restarting a transaction. 
As showed in \cite{hmpm:ppopp2005}, this primitive suffices
to implement MVars using STM Haskell:
\begin{Verbatim}[fontsize=\small, tabsize=2, xleftmargin=2ex, gobble=1]
	data MVar a = TVar (Maybe a)
	takeMVar v = do
		m <- readTvar v
		case m of
			Nothing -> retry
			Just r -> writeTVar v Nothing >> return r
\end{Verbatim}
Thus, a value of type \type{MVar\;a} is a 
transactional variable holding a value of type \type{Maybe\;a}, 
i.e.~a value which is either \textcode{Nothing} or actually
something of type \type{a}.
A thread applying \textcode{takeMVar} to an empty
MVar is effectively blocked since it retries the transaction upon reading
\textcode{Nothing} and then it is not rescheduled until the content
of the transactional variable changes.

Finally, transactions can be composed as alternatives by means of
\begin{Verbatim}[fontsize=\small, tabsize=2, xleftmargin=2ex, gobble=1]
	orElse :: STM a -> STM a -> STM a
\end{Verbatim}
which evaluates its first argument, and if this results is a \textcode{retry} the second argument is evaluated discarding any effect of the first.

\section{Haskell interface for Isolated and Open transactions}
\label{sec:monads}

\begin{figure}[t]
	\centering
	\definecolor{sbase03}{HTML}{002B36}
	\definecolor{sbase02}{HTML}{073642}
	\definecolor{sbase01}{HTML}{586E75}
	\definecolor{sbase00}{HTML}{657B83}
	\definecolor{sbase0}{HTML}{839496}
	\definecolor{sbase1}{HTML}{93A1A1}
	\definecolor{sbase2}{HTML}{EEE8D5}
	\definecolor{sbase3}{HTML}{FDF6E3}
	\definecolor{syellow}{HTML}{B58900}
	\definecolor{sorange}{HTML}{CB4B16}
	\definecolor{sred}{HTML}{DC322F}
	\definecolor{smagenta}{HTML}{D33682}
	\definecolor{sviolet}{HTML}{6C71C4}
	\definecolor{sblue}{HTML}{268BD2}
	\definecolor{scyan}{HTML}{2AA198}
	\definecolor{sgreen}{HTML}{859900}
	\def\c#1{\textcolor{sbase02}{\textit{#1}}}
	\def\h#1{\textcolor{sbase00}{\textit{#1}}}
\begin{Verbatim}[fontsize=\small,commandchars=\\\{\}]
\h{-- Types for transactional actions ------------------------------------------}
data ITM a  \c{-- isolated atomic transactional action, return a value of type a}
data OTM a  \c{-- open atomic transactional action, return a value of type a}

\h{-- Sequencing, do notation. Here t is a placeholder for ITM or OTM ----------}
(>>=)  :: t a -> (a -> t b) -> t b
return :: a -> t a

\h{-- Running isolated and atomic actions --------------------------------------}
atomic   :: OTM a -> IO a    \c{-- delivers the IO action for an atomic one}
isolated :: ITM a -> OTM a   \c{-- delivers the atomic action for an isolated one}
retry    :: ITM a            \c{-- retry the current transaction}
orElse   :: ITM a -> ITM a -> ITM a \c{-- fall back on the second action when the}
                                    \c{-- first action issues a retry}

\h{-- Exceptions ---------------------------------------------------------------}
throw :: Exception e => e -> t a
catch :: Exception e => t a -> (e -> t a) -> t a

\h{-- Threading ----------------------------------------------------------------}
fork :: OTM () -> OTM ThreadId

\h{-- Transactional shared memory ----------------------------------------------}
data OTVar a             \c{-- sharable memory location holding values of type a}
newOTVar     :: a -> ITM (OTVar a)      
readOTVar    :: OTVar a -> ITM a        
writeOTVar   :: OTVar a -> a -> ITM ()  
\end{Verbatim}
	\caption{The base interface of \libOTM.}
	\label{fig:base-interface}
\end{figure}

\begin{figure}[t]
	\centering
	\begin{tikzpicture}[font=\footnotesize]
	\node[] (n00) at (0,0) {\textcode{IO}};
	\node[] (n01) at (0,1) {\textcode{OTM}};
	\node[] (n02) at (0,2) {\textcode{ITM}};
	
	\draw[->] (n02) to node[left] {\textcode{isolated}} (n01);
	\draw[->] (n01) to node[left] {\textcode{atomic}} (n00);
	
	\node[] (n20) at (5,0) {\textcode{IO}};
	\node[] (n21) at (5,2) {\textcode{STM}};
	
	\draw[->] (n21) to node[right] {\textcode{atomically}} (n20);
	
	\node[] (n10) at (2.5,0) {Consistency, Durability};
	\node[] (n11) at (2.5,1) {Atomicity};
	\node[] (n12) at (2.5,2) {Isolation};
	
	\draw[gray,dashed] (.3,.5) -- (4.7,.5);
	\draw[gray,dashed] (.3,1.5) -- (4.7,1.5);
	\end{tikzpicture}
	\caption{ACID computations: splitting \textcode{atomically}.}
	\label{fig:acid-spectrum}
\end{figure}
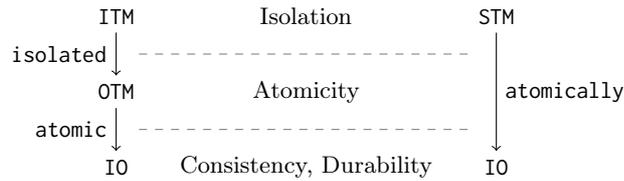

In this section we give a brief overview of the interface for open transactions for Haskell.
In fact, OTM can be implemented in any programming language, provided we have some means to forbid irreversible effects inside transactions; we have chosen Haskell because its typing system allows us to implement this restriction quite easily.
Namely, we define two monads \type{OTM} and \type{ITM} (see  \cref{fig:base-interface}), representing the computational aspects of atomic \emph{multi-threaded open} (i.e., non-isolated) transactions and atomic \emph{single-threaded isolated} transactions, respectively. 
Transactional memory locations are values of type \textcode{OTVar} and can be manipulated by isolated transactional actions only. 

Using the construct \type{atomic}, programs in the \type{OTM} monad are executed ``all-or-nothing'' but without isolation; hence these transactions can merge at runtime. When needed, actions inside transactions can be executed in isolation by using the construct \textcode{isolated}.
Both \type{OTM} and \type{ITM} transactions are \emph{composable}; 
we exploit Haskell type system to prevent irreversible effects inside these monads.
\libOTM is a conservative extension (in fact, a drop-in replacement) 
of \libSTM \citep{hmpm:ppopp2005}; in fact, \libSTM's \type{atomically} is precisely the composition of \type{atomic} and \type{isolated} (\cref{fig:acid-spectrum}).
This allows programmers to decide the granularity of interactions; e.g., the snippet below combines read and write actions to define an isolated atomic update of a transactional location.
\begin{Verbatim}[fontsize=\small, tabsize=2, xleftmargin=2ex, gobble=1]
	modifyOTVar :: OTVar a -> (a -> a) -> ITM ()
	modifyOTVar var f = do
		x <- readOTVar var
		writeOTVar var (f x) 
\end{Verbatim}
Invariants on transactional locations can be easily checked by composing reads with checks that issue a retry if the invariant is not met, as in the snippet below.
\begin{Verbatim}[fontsize=\small, tabsize=2, xleftmargin=2ex, gobble=1]
	assertOTVar :: OTVar a -> (a -> Bool) -> ITM ()
	assertOTVar var p = do
		x <- readOTVar var
		check (p x)
	
	check :: Bool -> ITM ()
	check b = if b then return () else retry
\end{Verbatim}
By sharing \textcode{OTVar}s, non-isolated actions can share their view of transactional memory and affect each other. Consistency is guaranteed by merging transactions upon interaction thus the merged transaction may commit only if all participants agree on the final state of shared \textcode{OTVar}s.

\section{Additional examples}
\label{sec:examples}

\paragraph{Semaphores}
A semaphore is a counter with two fundamental operation:
\textcode{up} which increments the counter and
\textcode{down} which decrements the counter if
it is not zero and blocks otherwise.
Semaphores are implemented using \libOTM as OTVars
holding a counter:
\begin{Verbatim}[fontsize=\small, tabsize=2, xleftmargin=2ex, gobble=1]
	type Semaphore = OTVar Int
\end{Verbatim}
Then, \textcode{up} and \textcode{down} are
two trivial atomic and isolated updates,
with the latter being guarded by a pre-condition:
\begin{Verbatim}[fontsize=\small, tabsize=2, xleftmargin=2ex, gobble=1]
	up :: Semaphore -> ITM ()
	up s = modifyOTvar s (1+)
	
	down :: Semaphore -> ITM ()
	down s = do
		assertOTVar s (> 0)
		modifyOTVar s (-1+)
\end{Verbatim}

Actions can also be composed as alternatives
by means of the primitive \textcode{orElse}.
For instance, the following takes a family of semaphores
and delivers an action that decrements one of them, blocking
only if none can be decremented:
\begin{Verbatim}[fontsize=\small, tabsize=2, xleftmargin=2ex, gobble=1]
	downAny :: [Sempahore] -> ITM ()
	downAny (x:xs) = down x `orElse` downAny xs
	downAny [] = retry
\end{Verbatim}

\paragraph{Synchronisation}

Let us see open transactions in action by implementing a synchronisation scenario as described in \cref{sec:introduction}.
In this example a master process outsources part of an atomic computation to some thread chosen from a worker pool; data is exchanged via some shared variable, whose access is coordinated by a pair of semaphores. Notably, both the master and the worker can abort the computation at any time, leading the other party to abort as well. 
This can be achieved straightforwardly using \libOTM:
\\[\abovedisplayskip]
\begin{minipage}[t]{.5\textwidth}
	\begin{Verbatim}[fontsize=\small, tabsize=2, xleftmargin=2ex, gobble=2]
		master c1 c2 = do
			-- put request
			isolated (up c1)
			-- do something else
			isolated (down c2)
			-- get answer
	\end{Verbatim}
	\end{minipage}\begin{minipage}[t]{.5\textwidth}
	\begin{Verbatim}[fontsize=\small, tabsize=2, gobble=2]
		worker c1 c2 = do
			-- do something
			isolated (down c1)
			-- get request
			-- put answer
			isolated (up c2)
	\end{Verbatim}
	\end{minipage}
\\[\belowdisplayskip]
Both functions deliver atomic actions in
\textcode{OTM}, and hence are not isolated. We used semaphores 
for the sake of exposition but we could synchronize by means of more abstract
mechanisms, like barriers, channels or futures, 
which can be implemented using 
\libOTM as discussed in the rest of this section. 

\paragraph{Crowdfunding}
We consider a scenario in which
one party needs to atomically acquire a given number 
of resources which are offered by a dynamic group.
For sake of exposition we rephrase the example
using the metaphor of a fundraiser's ``crowdfunding campaign'': 
the resources to be acquired are the campaign goal
and the resources are donated by a dynamically determined crowd of \emph{backers}.
The implementation is shown in \cref{fig:example-funding}.

\begin{figure}[t]
	\begin{minipage}[t]{.5\textwidth}
	\begin{Verbatim}[fontsize=\small, tabsize=2, xleftmargin=2ex, gobble=2]
		type Account = OTVar Int
		type Campaign = (Account, Int)
		
		transfer :: Account -> Account 
		            -> Int -> ITM ()
		transfer a1 a2 n = do
			withdraw a1 n
			deposit a2 n
		
		newCampaign target = do
			a <- newOTvar 0
			return (a, target)
	\end{Verbatim}
	\end{minipage}\begin{minipage}[t]{.5\textwidth}
	\begin{Verbatim}[fontsize=\small, tabsize=2, gobble=2]
		backCampaign :: Account -> Campaign 
		                -> Int -> ITM ()
		backCampaign a (a',_) k = 
			transfer a a' k
		
		commitCampaign :: Account -> Campaign 
		                  -> ITM ()
		commitCampaign a (a', t) = do
			x <- readOTVar a'
			check (x >= t)
			transfer a' a x
	\end{Verbatim}
	\end{minipage}
    \caption{Crowdfunding.}
	\label{fig:example-funding}
\end{figure}

Each participant has a bank account, i.e.~an OTVar holding an integer 
representing its balance. Accounts have two operations \textcode{deposit} 
and \textcode{withdraw} which are implemented along the lines of 
\textcode{up} and \textcode{down}, respectively;  \textcode{withdraw} blocks until the account has enough funds.
A campaign have a temporary account to store funds
before transferring them to the fundraiser that closes
the campaign; this operation blocks until the goal 
is met.
Backer participants transfer a chosen amount of funds from their
account to the campaign account, but the transfer is delayed until the campaign is closed. Notice that participants do not need to know each other to coordinate.

\paragraph{Thread barriers}
Barriers are abstractions used to coordinate groups of threads;
once reached a barrier, threads cannot cross it
until all other participants reach the barrier. 
Thread groups can be either dynamic or static, depending
on whether threads may join 
the group or not. Here we consider dynamic groups. 

Threads interact with barriers with \textcode{join} for  
joining the group associated with the barrier and with \textcode{await} for blocking 
waiting all participants before crossing.

\begin{figure}[t]
	\begin{minipage}[t]{.5\textwidth}
	\begin{Verbatim}[fontsize=\small, tabsize=2, xleftmargin=2ex, gobble=2]
		type Barrier = OTVar (Int, Int)
		
		newBarrier :: ITM Barrier
		newBarrier = newOTVar (0,0)	
		
		join :: Barrier -> ITM ()
		join b = do
			assertOTVar b nobodyWaiting
			modifyOTVar b (bimap (1+) id)

		bimap f g (a, b) = (f a, g b)
	\end{Verbatim}
	\end{minipage}\begin{minipage}[t]{.5\textwidth}
	\begin{Verbatim}[fontsize=\small, tabsize=2, gobble=2]			
		await :: Barrier -> OTM ()
		await b = do
			isolated $ modifyOTVar b 
									(bimap (-1+) (1+))
			isolated $ do
				assertOTVar b nobodyRunning
				modifyOTVar b (bimap id (-1+))
		
		nobodyRunning (r,_) = r == 0
		nobodyWaiting (_,w) = w == 0
	\end{Verbatim}
	\end{minipage}
	\caption{Thread barrier.}
	\label{fig:example-barrier}
\end{figure}

Barriers can be implemented using \libOTM in few lines as shown
in \cref{fig:example-barrier}. A barrier is composed by
a transactional variable holding a pair of counters 
tracking the number of participating threads that are waiting
or running. For sake of simplicity, we prevent new joins 
during barrier crossing.
This is enforced by the assertion guarding the counter update
performed by \textcode{join}. 
Waiting and crossing correspond to the two isolated actions composing
\textcode{await}: the first changes the state of the thread from
running to waiting and the second ensures that all threads reached the barrier
before crossing and decrementing the waiting counter.
Differently from \textcode{join}, \textcode{await} cannot
be isolated: isolation would prevent other
participants from updating their state from ``running'' to ``waiting''.

This implementation is meant as a way to coordinate
concurrent transactional actions but it may be used
to coordinate concurrent I/O actions as it is.
The latter scenario could be implemented also using
\libSTM, but in this case \textcode{await} would necessarily
be an I/O action since it cannot be an
isolated atomic action (i.e., of type \type{STM\;a})---and hence, it would not be atomic either.

\paragraph{Atomic futures}
Suppose we want to delegate some task to another thread and 
collect the result once it is ready. An intuitive way to 
achieve this is by means of \emph{futures}, i.e.~``proxy results'' 
that will be produced by the worker threads. 

\begin{figure}
	\begin{minipage}[t]{.5\textwidth}
	\begin{Verbatim}[fontsize=\small, tabsize=2, xleftmargin=2ex, gobble=2]
		type Future a = OTVar (Maybe a)
					
		getFuture :: Future a -> ITM a
		getFuture f = do
			v <- readOTVar f
			case v of
				Nothing -> retry
				Just val -> return val
	\end{Verbatim}
	\end{minipage}\begin{minipage}[t]{.5\textwidth}
	\begin{Verbatim}[fontsize=\small, tabsize=2, gobble=2]
		spawn :: OTM a -> OTM (Future a)
		spawn job = do 
			future <- newOTVar Nothing
			fork (worker future)
			return future
			where
				worker :: Future a -> OTM ()
				worker future = do 
					result <- job
					writeOTVar future (Just $! result)
	\end{Verbatim}
	\end{minipage}
	\caption{Atomic futures.}
	\label{fig:example-futures}
\end{figure}

A future can be implemented in \libOTM by a TVar
holding a value of type \type{Maybe a}: either it is
``not-ready-yet'' (\textcode{Nothing}) or it holds something 
of type \type{a}.
Future values are retrieved via \textcode{getFuture} which takes 
a future and delivers an action that blocks until the value is
ready and then produces the value.
Futures are created by \textcode{spawn} which takes a transactional
action to be performed by a forked (transactional) thread. 
The complete implementation is in \cref{fig:example-futures}.

\paragraph{Petri nets}
Petri nets are a well-known (graphical) formal model for concurrent, discrete-event dynamic systems.
A Petri net is readily implemented in \libOTM 
by representing each transition by a thread, and each place by a semaphore.
Putting and taking a token from a place correspond to 
increasing (\textcode{up}) or decreasing (\textcode{down})
its semaphore---the latter blocks if no tokens are available.
Each thread repeatedly simulates the firing of the transition it 
represents, by taking tokens from its input places and putting tokens in its output places.
These semaphore operations must be 
performed atomically but not in isolation; in fact, 
isolation would prevent transitions sharing a place to
fire concurrently. 
Using \libOTM, all this is achieved in few lines:
\begin{Verbatim}[fontsize=\small, tabsize=2, xleftmargin=2ex, gobble=1]
	type Place = Semaphore
	
	transition :: [Place] -> [Place] -> IO ThreadId
	transition inputs outputs = forkIO (forever fire)
		where 
			fire = atomic $ do
				mapM_ (isolated . down) inputs
				mapM_ (isolated . up)   outputs
\end{Verbatim}
Note that, since firing is atomic but not isolated,
the above is an implementation of \emph{true concurrent} Petri nets, which is usually more difficult to achieve than interleaving semantics.

For instance, consider the Petri net in \cref{fig:example-petri},
it is immediate to implement it as follows:
\begin{Verbatim}[fontsize=\small, tabsize=2, xleftmargin=2ex, gobble=1]
	main = do
		p1 <- atomically (newPlace 1)
		p2 <- atomically (newPlace 0)
		p3 <- atomically (newPlace 0)
		p4 <- atomically (newPlace 0)
		transition [p1] [p3, p4]
		transition [p1, p2] [p4]
\end{Verbatim}
Since $p_1$ has only one token either $t_1$ or $t_2$ fires. In fact,
if $t_2$ acquires the token it will fail to acquire the other
from $p_2$ and hence its transaction retries releasing the token and leaving it to $t_1$.

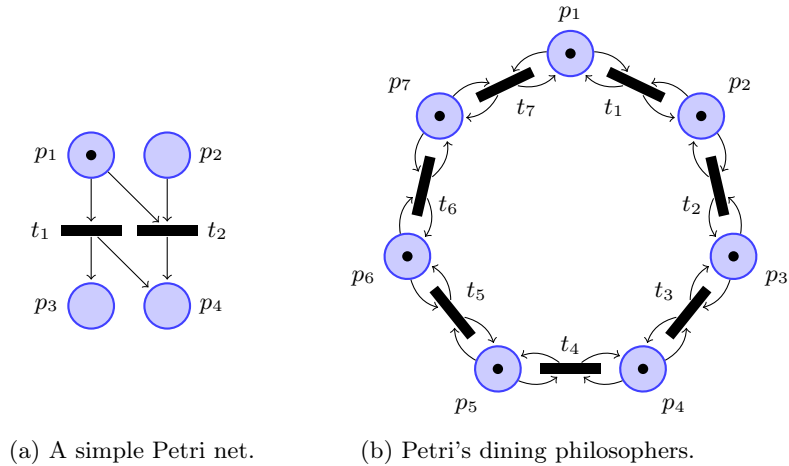
\begin{figure}[t]
	\centering
	\begin{subfigure}[b]{0.4\textwidth}
		\centering
		\begin{tikzpicture}
			\node [place,tokens=1, label=left:$p_1$] (p1) at (0,2) {};
			\node [place, label=right:$p_2$] (p2) at (1,2) {};
			
			\node [place, label=left:$p_3$] (p3) at (0,0) {};
			\node [place, label=right:$p_4$] (p4) at (1,0) {};
			
			\node [transition,label=left:$t_1$] (t1) at (0,1) {}
			edge [pre]  (p1)
			edge [post] (p3)
			edge [post] (p4);
			
			\node [transition,label=right:$t_2$] (t2) at (1,1) {}
			edge [pre]  (p1)
			edge [pre]  (p2)
			edge [post] (p4);
		\end{tikzpicture}
		\vspace{9ex}
		\caption{A simple Petri net.}
		\label{fig:example-petri}
	\end{subfigure}\quad\begin{subfigure}[b]{0.4\textwidth}
		\centering
		\begin{tikzpicture}[rotate=90]
			\def \k {7}
			\def \radius {2.2cm}
			
			\foreach \s in {1,...,\k} {
				\draw 
				let 
				\n1={-360/\k * (\s-1)}
				in 
				node [place,tokens=1,rotate=\n1,label=90:$p_\s$] 
				(p\s) at (\n1:\radius) {};
			}
			\foreach \s in {1,...,\k} {
				\draw 
				let 
				\n1={int(mod(\s,\k)+1)},
				\n2={-360/\k * (\s-1) - 360/(\k*2)}
				in 
				node [transition, rotate=\n2,label=-90:$t_\s$]
				(t\s) at 
				($(p\s)!.5!(p\n1)$) 
				{}
				edge [pre, bend right]  (p\s)
				edge [pre, bend left]  (p\n1)
				edge [post, bend left] (p\s)
				edge [post, bend right] (p\n1);
			}
		\end{tikzpicture}
		\caption{Petri's dining philosophers.}
		\label{fig:example-petri-philosophers}
	\end{subfigure}
	\caption{Examples of Petri nets.}
\end{figure}

Dijkstra's \emph{dining philosophers problem} is a textbook
classic of concurrency theory. This problem can be modelled using Petri 
nets representing each fork and philosopher as a place and as a transition
respectively; the Petri net model for the 7 philosophers instance is in  \cref{fig:example-petri-philosophers}.
Then, we can use the above implementation of Petri nets to simulate
$k$ philosophers on $k$ threads as follows:
\begin{Verbatim}[fontsize=\small, tabsize=2, xleftmargin=2ex, gobble=1]
	philosophers k = mapM_ philosopher =<< pairs
		where
			philosopher (l,r) = transition [l,r] [l,r]
			left = satomically . sequence . take k . repeat $ newPlace 1
			right = take k . drop 1 . cycle <$> left
			pairs = zip <$> right <*> left
\end{Verbatim}
Under the assumption of fair scheduling,
no execution locks.

With minor variations to \textcode{transaction}, the above 
implementation can be used to orchestrate code, using 
abstract models based on Petri nets.

\section{Formal semantics of \libOTM}
\label{sec:semantics}
In this section we provide the formal semantics of \libOTM.
Following \cite{hmpm:ppopp2005}, we fix an Haskell-like language extended with the \libOTM primitives of \cref{fig:base-interface} and characterise the behaviour of \libOTM by means of an abstract machine.

The language syntax is given by the following grammar:
\begin{alignat*}{4}
\textit{Values} &\hspace{1.2ex}& V \Coloneqq {} & r \mid \textcode{\textbackslash $x$\;->\;$M$}
		\mid \textcode{return\;$M$}
		\mid \textcode{$M$\;>>=\;$N$}
		\mid \textcode{throw\;$M$} 
		\mid \textcode{catch\;$M$\;$N$}
\\&& \mid {} & 
		\textcode{putChar\;$c$}
		\mid \textcode{getChar}
		\mid \textcode{fork\;$M$}
		\mid \textcode{atomic\;$M$}
		\mid \textcode{isolated\;$M$}
		\mid \textcode{retry}
\\&& \mid {} & 
		\textcode{$M$\;`orElse`\;$N$} 
		\mid \textcode{newOTVar\;$M$} 
		\mid \textcode{readOTVar\;$r$}
		\mid \textcode{writeOTVar\;$r$\;$M$}
\\\textit{Terms} && M \Coloneqq {} & x \mid V \mid M\,N
\end{alignat*}
where the meta-variables $x$ and $r$ range over a given countable set of variables \Var\ and of location names \Loc, respectively.
We assume Haskell typing conventions and denote the set of all well-typed terms by $\Term$. 

Terms are evaluated by an abstract state 
machine whose states are pairs $\ctmSt{\Sigma}{P}$ formed by:
\begin{itemize}
	\item
	a \emph{thread family} (or \emph{process}) $P = T_{t_1} \parallel \dots \parallel T_{t_n}$ where $t_i$ are unique \emph{thread identifiers};
	\item a \emph{memory} $\Sigma = \langle \Theta, \Delta ,\Psi \rangle$, where $\Theta : \Loc \rightharpoonup \Term$ is the \emph{heap} and $\Delta : \Loc \rightharpoonup \Term \times \TrName$ is the \emph{working memory};
	$\TrName$ is a set of names used to identify active transactions;
	$\Psi$ is a forest of threads identifiers keeping track of how threads have been forked.
\end{itemize}
Threads are the smaller unit of execution the machine scheduler operates on; they evaluate \libOTM terms and do not have any private transactional memory.
A thread $T_t$ has two forms: $\ctmthread{M}$ for threads evaluating a term $M$ outside a transaction and $\ctmthreadtr{M}{N}{t}{k}$ for threads evaluating $M$ inside transaction $k$ with continuation $N$ (the term to evaluate after that $k$ has committed).

As for traditional closed (ACID) transactions (e.g., \cite{hmpm:ppopp2005}), operations inside a transaction
are evaluated against the distributed working memory $\Delta$ and effects are propagated to the heap $\Theta$ only on commits.  
When a thread inside a transaction $k$ accesses a location outside $\Delta$ the location is \emph{claimed by transaction $k$} and remains claimed until $k$ commits, aborts or restarts. Threads in $k$ can interact only with locations claimed by $k$, but active transactions can be merged to share their claimed locations.
We denote the set of all possible states as $\State$, and reference to each projected component of $\Sigma$ by a subscript, i.e.~$\Sigma_\Theta$ for the heap and $\Sigma_\Delta$ for the working memory.
When describing  updates to the memory $\Sigma$, we adopt the convention that $\Sigma'$ has to be intended equals to $\Sigma$ except if stated otherwise, i.e.~by statements like $\Sigma'_\Theta = \Sigma_\Theta[r \mapsto M]$. Finally, $\varnothing$ denotes the empty heap and working memory.

\begin{figure}[p]
	\begin{gather*}
		\infer[\rlabel{\rname{Eval}}{rule:admn-eval}]
			{M \to V}
			{M \not\equiv V \quad \mathcal{V}[M] = V}
\\
		\infer[\rlabel{\rname{BindVal}}{rule:admn-bindv}]
		{\textcode{return\;$M$\;>>=\;$N$} \to \textcode{$N$\,$M$}}
		{}
		\qquad
		\infer[\rlabel{\rname{BindEx}}{rule:admn-binde}]
			{\textcode{e\;>>=\;$M$} \to \textcode{e}}
			{\textcode{e} \in \{\textcode{retry},\textcode{throw\;$N$}\}}
\\
		\infer[\rlabel{\rname{CatchVal}}{rule:admn-catchv}]
			{\textcode{r\;`catch`\;$M$} \to \textcode{r}}
			{\textcode{r} \in \{\textcode{retry},\textcode{return\;$N$}\}}
		\qquad
		\infer[\rlabel{\rname{CatchEx}}{rule:admn-catche}]
			{\textcode{throw\;$M$\;`catch`\;$N$} \to \textcode{$N$\,$M$}}
			{}
	\end{gather*}
	\vspace{-1ex}
	\caption{Term reductions: \small$M \to N$.}
	\label{fig:semantics-term}
%
	\begin{gather*}
		\infer[\rlabel{\rname{InChar}}{rule:input-char}]
		{\ctmSt{\Sigma}{\ctxP{\textcode{getChar}}} 
			\xrightarrow{?c}
			\ctmSt{\Sigma}{\ctxP{\textcode{return\;$c$}}}}
		{}
\\
		\infer[\rlabel{\rname{OutChar}}{rule:output-char}]
		{\ctmSt{\Sigma}{\ctxP{\textcode{putChar\;$c$}}} 
			\xrightarrow{!c}
			\ctmSt{\Sigma}{\ctxP{\textcode{return\;()}}}}
		{}
\quad
		\infer[\rlabel{\rname{TermIO}}{rule:termio}]
			{\ctmSt{\Sigma}{\ctxP{M}} 
				\xrightarrow{\tau}
				\ctmSt{\Sigma}{\ctxP{N}}}
			{M \to N}
\\
		\infer[\rlabel{\rname{ForkIO}}{rule:forkio}]
		{\ctmSt{\Sigma}{\ctxP{\textcode{fork\;$M$}}}
			\xrightarrow{\tau}
			\ctmSt{\Sigma}{\ctxP{\textcode{return\;$t'$}} \parallel 
				\ctmthread[t']{M}}}
		{t'\notin \mathsf{threads}(\ctxP{\textcode{fork\;$M$}})}
	\end{gather*}
		\vspace{-1ex}
	\caption{\type{IO} state transitions.}
	\label{fig:semantics-io}
	\begin{gather*}
		\infer[\rlabel{\rname{TermT}}{rule:term}]
			{\ctmSt{\Sigma}{\ctxT{M}} 
			\xrightarrow{\tau}
			\ctmSt{\Sigma}{\ctxT{N}}}
			{M \to N}
	\\
		\infer[\rlabel{\rname{ForkT}}{rule:fork}]
		{\ctmSt{\Sigma}{\ctxT{\textcode{fork\;$M$}}}
		 \xrightarrow{\tau}
		 \ctmSt{\Sigma'}{\ctxT{\textcode{return\;$t'$}} \parallel 
		 \ctmthreadtr{M}{\textcode{return}}{t'}{k}}}
		{t'\notin \mathsf{threads}(\ctxT{\textcode{fork\;$M$}})
		&\Sigma'_\Psi=\mathsf{add\_child}(t,t',\Sigma_\Psi)}
	\\
		\infer[\rlabel{\rname{NewVar}}{rule:newvar}]
		{\ctmSt{\Sigma}{\ctxT{\textcode{newOTVar\;$M$}}}
		 \xrightarrow{\tau}
		 \ctmSt{\Sigma'}{\ctxT{\textcode{return\;$r$}}}}
		{r \notin \dom(\Sigma_\Theta)\cup\dom(\Sigma_\Delta)
		&\Sigma'_\Delta = \Sigma_\Delta[r\mapsto (M,k)]}
	\\
		\infer[\rlabel{\rname{Read1}}{rule:read-miss}]{
			\ctmSt{\Sigma}{\ctxT{\textcode{readOTVar\;$r$}}}
			\xrightarrow{\tau}
			\ctmSt{\Sigma'}{\ctxT{\textcode{return\;$M$}}}
		}{
			r \notin \dom(\Sigma_\Delta)
			&\Sigma_\Theta(r) = M
			&\Sigma'_\Delta = \Sigma_\Delta[r \mapsto (M,k)]
		}
	\\
		\infer[\rlabel{\rname{Read2}}{rule:read-hit}]
		{\ctmSt{\Sigma}{\ctxT{\textcode{readOTVar\;$r$}}}
		 \xrightarrow{\tau}
		 \ctmSt{\Sigma'}{\ctxT[t,j]{\textcode{return\;$M$}}}}
		{\Sigma_\Delta(r) = (M,j)
		&\Sigma'_\Delta = \Sigma_\Delta[k \mapsto j]}	
	\\
		\infer[\rlabel{\rname{Write1}}{rule:write-miss}]
		{\ctmSt{\Sigma}{\ctxT{\textcode{writeOTVar\;$r$\;$M$}}}
		 \xrightarrow{\tau}
		 \ctmSt{\Sigma'}{\ctxT{\textcode{return\;()}}}}
		{r \notin \dom(\Sigma_\Delta)
		&\Sigma'_\Delta = \Sigma_\Delta[r \mapsto (M,k)]}
	\\
		\infer[\rlabel{\rname{Write2}}{rule:write-hit}]
			{\ctmSt{\Sigma}{\ctxT{\textcode{writeOTVar\;$r$\;$M$}}}
			 \xrightarrow{\tau}
			 \ctmSt{\Sigma'}{\ctxT[t,k]{\textcode{return\;()}}[k \mapsto j]}}
			{\Sigma_\Delta(r) = (N,j)
			&\Sigma'_\Delta = \Sigma_\Delta[k \mapsto j][r \mapsto (M,j)]}
	\\
		\infer[\rlabel{\rname{Or1}}{rule:orfirst}]{
			\ctmSt{\Sigma}{\ctxT{\textcode{$M$\;`orElse`\;$M'$}}}
			\xrightarrow{\tau}
			\ctmSt{\Sigma'}{\mathbb{T}_{t,j}[\textcode{op\;$N$}]}
		}{
					\textcode{op} \in \{\textcode{throw}, \textcode{return}\}
					&
					\ctmSt{\Sigma}{\ctmthreadtr{M}{\textcode{return}}{t}{k}} 
					\xrightarrow{\tau}^*
					\ctmSt{\Sigma'}{\ctmthreadtr{\textcode{op\;$N$}}{\textcode{return}}{t}{j}}
				}
	\\
		\infer[\rlabel{\rname{Or2}}{rule:orsecond}]
			{\ctmSt{\Sigma}{\ctxT{\textcode{$M$\;`orElse`\;$M'$}}}
			 \xrightarrow{\tau}
			 \ctmSt{\Sigma}{\ctxT{M'}}}
			{\ctmSt{\Sigma}{\ctmthreadtr{M}{\textcode{return}}{t}{k}} 
			 \xrightarrow{\tau}^*
			 \ctmSt{\Sigma'}{\ctmthreadtr{\textcode{retry}}{\textcode{return}}{t}{j}}}
	\\
		\infer[\rlabel{\rname{Isolated}}{rule:isolated}]
		{\ctmSt{\Sigma}{\ctxT{\textcode{isolated\;$M$}}} 
		 \xrightarrow{\tau}
		 \ctmSt{\Sigma'}{\ctxT[t,j]{\textcode{op\;$N$}}}}
		{\textcode{op} \in \{\textcode{throw}, \textcode{return}\}
		&\ctmSt{\Sigma}{\ctmthreadtr{M}{\textcode{return}}{t}{k}} 
		 \xrightarrow{\tau}^*
		 \ctmSt{\Sigma'}{\ctmthreadtr{\textcode{op\;$N$}}{\textcode{return}}{t}{j}}}
	\end{gather*}
		\vspace{-1ex}
	\caption{Transactional state transitions: \small$\ctmSt{\Sigma}{P} \xrightarrow{\tau}\ctmSt{\Sigma'}{P'}$.}
	\label{fig:semantics-tau}
\end{figure}

\begin{figure}[t]
	\begin{gather*}
		\infer[\rlabel{\rname{New}}{rule:tr-new}]
		{\ctmSt{\Sigma}{\ctmthread{\textcode{atomic\;$M$\;>>=\;$N$}}}
		 \xrightarrow{new\langle k\rangle}
		 \ctmSt{\Sigma}{\ctmthreadtr{M}{N}{t}{k}}}
		{}
	\\
		\infer[\rlabel{\rname{Commit}}{rule:tr-commit}]
		{\ctmSt{\Sigma}{\ctmthreadtr{\textcode{return\;$M$}}{N}{t}{k}}
		 \xrightarrow{co\langle k\rangle}
		 \ctmSt{\Sigma'}{\ctmthread{\textcode{return\;$M$\;>>=\;$N$}}}}
		{\Sigma'_\Theta = \mathsf{commit}(k,\Sigma)
		&\Sigma'_\Delta = \mathsf{cleanup}(k,\Sigma)}
	\\
		\infer[\rlabel{\rname{Abort1}}{rule:tr-abort-1}]
		{\ctmSt{\Sigma}{\ctmthreadtr{\textcode{throw\;$M$}}{N}{t}{k}}
		 \xrightarrow{ab\langle k, t, M\rangle}
		 \ctmSt{\Sigma'}{\ctmthread{\textcode{throw\;$M$\;>>=\;$N$}}}}
		{\Sigma'_\Theta = \mathsf{leak}(k,\Sigma)
		&\Sigma'_\Delta = \mathsf{cleanup}(k,\Sigma)
		&\Sigma'_\Psi = \mathsf{remove}(r,\Sigma_\Psi)
		&r = \mathsf{root}(t,\Sigma_\Psi)}
	\\
		\infer[\rlabel{\rname{Abort2}}{rule:tr-abort-2}]
		{\ctmSt{\Sigma}{\ctmthreadtr{M'}{N}{t'}{k}}
		 \xrightarrow{\overline{ab}\langle k, t, M\rangle}
		 \ctmSt{\Sigma'}{\ctmthread[t']{\textcode{throw\;$M$\;>>=\;$N$}}}}
		{\begingroup\def\arraystretch{1.1}\array{c}
		r = \mathsf{root}(t,\Sigma_\Psi) \qquad
		r = \mathsf{root}(t',\Sigma_\Psi) \\
		\Sigma'_\Theta = \mathsf{leak}(k,\Sigma) \qquad
		\Sigma'_\Delta = \mathsf{cleanup}(k,\Sigma) \qquad
		\Sigma'_\Psi = \mathsf{remove}(r,\Sigma_\Psi)
		\endarray\endgroup}
	\\
		\infer[\rlabel{\rname{Abort3}}{rule:tr-abort-3}]
		{\ctmSt{\Sigma}{\ctmthreadtr{M'}{N}{t'}{k}}
		 \xrightarrow{\overline{ab}\langle k, t, M\rangle}
		 \ctmSt{\Sigma'}{\ctmthread[t']{\textcode{retry}}}}
		{\begingroup\def\arraystretch{1.1}\array{c}
		r = \mathsf{root}(t,\Sigma_\Psi) \qquad
		r \neq \mathsf{root}(t',\Sigma_\Psi) \\
		\Sigma'_\Theta = \mathsf{leak}(k,\Sigma) \qquad
		\Sigma'_\Delta = \mathsf{cleanup}(k,\Sigma) \qquad
		\Sigma'_\Psi = \mathsf{remove}(r,\Sigma_\Psi)
		\endarray\endgroup}
	\\
		\infer[\rlabel{\rname{MCastAb}}{rule:tr-multicast-abort}]
		{\ctmSt{\Sigma}{P\parallel Q} \xrightarrow{ab\langle k, t, M\rangle} \ctmSt{\Sigma'}{P'\parallel Q'}}
		{\ctmSt{\Sigma}{P} \xrightarrow{ab\langle k, t, M\rangle} \ctmSt{\Sigma'}{P'}
		&\ctmSt{\Sigma}{Q} \xrightarrow{\overline{ab}\langle k, t, M\rangle} \ctmSt{\Sigma'}{Q'}}
	\\
		\infer[\rlabel{\rname{MCastCo}}{rule:tr-multicast-commit}]
		{\ctmSt{\Sigma}{P \parallel Q}  \xrightarrow{co\langle k\rangle} \ctmSt{\Sigma'}{P' \parallel Q'}}
		{\ctmSt{\Sigma}{P} \xrightarrow{co\langle k\rangle} \ctmSt{\Sigma'}{P'}
		&\ctmSt{\Sigma}{Q} \xrightarrow{co\langle k\rangle} \ctmSt{\Sigma'}{Q'}}
	\\
		\infer[\rlabel{\rname{MCastGroup}}{rule:tr-multicast-context}]
		{\ctmSt{\Sigma}{P \parallel Q} \xrightarrow{\beta} \ctmSt{\Sigma'}{P' \parallel Q}}
		{\ctmSt{\Sigma}{P} \xrightarrow{\beta} \ctmSt{\Sigma'}{P'}
		&\beta \neq \tau 
		&\mathsf{transaction}(\beta) \notin  \mathsf{transactions}(Q)}
	\end{gather*}
	\caption{Transaction management transitions: \small$\ctmSt{\Sigma}{P} \xrightarrow{\beta}\ctmSt{\Sigma'}{P'}$.}
	\label{fig:semantics-trs-mgr}
	\vspace{-2ex}
\end{figure}	

\begin{figure}
	\begin{align*}
	\mathsf{threads}(T_{t_1} \parallel \dots \parallel T_{t_n}) &\defeq \{t_1, \dots t_n\}
	\\
	\mathsf{transaction}(\beta) &\defeq k \text{ for } 
	\beta\in\{new\langle k\rangle, co\langle k\rangle, ab\langle k, t, M\rangle, \overline{ab}\langle k, t, M\rangle\}
	\\
	(\Delta[k \mapsto j])(r) &\defeq
	\begin{cases}
	\Delta(r) &\!\text{if}\ \Delta(r)= (M,l),l\neq k\\
	(M,j) &\!\text{if}\ \Delta(r) = (M, k)
	\end{cases}
	\\
	\mathsf{transactions}(P) &\defeq 
	\begin{cases}
	\mathsf{transactions}(P_1) \cup \mathsf{transactions}(P_2) & 
	\!\text{if } P = P_1 \parallel P_2\\
	\{k\}&\!\text{if } P = \ctmthreadtr{M}{N}{t}{k}\\
	\emptyset &\!\text{otherwise}
	\end{cases}
	\\
	P[k \mapsto j] &\defeq 
	\begin{cases}
	P_1[k \mapsto j] \parallel P_2[k \mapsto j] &\!\text{if } P = P_1 \parallel P_2\\
	\ctmthreadtr{M}{N}{t}{j} &\!\text{if } P = \ctmthreadtr{M}{N}{t}{k}\\
	P &\!\text{otherwise}
	\end{cases}
	\\
	\Theta[r \mapsto M](s) &\defeq
	\text{if } r = s \text{ then } M \text{ else } \Theta(s)
	\\
	\Delta[r \mapsto (M,k)](s) &\defeq
	\text{if } r = s \text{ then } (M,k) \text{ else } \Delta(s)
	\\
	\mathsf{cleanup}(k,\Sigma)(r) &\defeq 
	\text{if } \Sigma_\Delta(r) = (M,k) \text{ then } \perp \text{ else } 
	\Sigma_\Delta(r)
	\\
	\mathsf{commit}(k,\Sigma)(r) & \defeq 
	\text{if } \Sigma_\Delta(r) = (M,k) \text{ then } M \text{ else } \Sigma_\Theta(r)
	\\
	\mathsf{leak}(k,\Sigma)(r) &\defeq 
	M \!\text{ if } \Sigma_\Theta(r) = M \text{ or } \Sigma_\Theta(r) = {\perp} \text{ and } \Sigma_\Delta(r) = (M,k)
	\end{align*}
	\caption{Auxiliary functions used by the formal semantics of \libOTM.}
	\label{fig:auxfuns}
\end{figure}

\paragraph{Semantics}
The machine dynamics is defined by the two transition relations induced by the rules in \Cref{fig:semantics-term,fig:semantics-io,fig:semantics-tau,fig:semantics-trs-mgr}; auxiliary definitions are in \cref{fig:auxfuns}.

The first relation $M \to N$ is defined on terms only, and models pure computations (\cref{fig:semantics-term}).
In particular, \cref{rule:admn-eval} allows a term $M$ that is not a value to be evaluated by means of an
auxiliary (partial) function $\mathcal{V}[M]$ yielding the value $V$; the other rules define the semantics of the monadic \type{bind} and exception handling in a standard way.
It is interesting to notice the symmetry between \type{bind} and \type{catch} and how \textcode{retry} is treated as an exception by \cref{rule:admn-binde} and as a result value by \cref{rule:admn-catchv}.

Relation $\to$ is used to define the labelled transition relation $\ctmSt{\Sigma}{P} \xrightarrow{\beta} \ctmSt{\Sigma'}{P'}$ over states.
This relation is non deterministic, to model the fact that the scheduler can choose among various threads to execute next; therefore, several rules can apply to a given state according to different evaluation contexts:
\begin{alignat*}{3}
	\textit{Expression: }\quad  \mathbb{E}\Coloneqq & \ctxhole \mid \mathbb{E} \textcode{ >>= } M
	&\qquad 
	\textit{Plain process: }  \mathbb P_t \Coloneqq &\ctmthread{\mathbb E}  \parallel P
	\\
	\textit{Transaction: }  \mathbb T_{t,k} \Coloneqq & \ctmthreadtr{\mathbb E}{M}{t}{k}  \parallel P
	&
	\textit{Any process: }  \mathbb A_t \Coloneqq & \mathbb{P}_t \mid \mathbb{T}_{t,k}	
\end{alignat*}

Labels $\beta$ describe the kind of transition, and are defined as follows:
\[
\beta \Coloneqq \tau
	\mid new\langle k\rangle 
	\mid co\langle k\rangle
	\mid ab\langle k, t, M\rangle
	\mid \overline{ab}\langle k, t, M\rangle
	\mid\ ?c
	\mid\ !c
\]
where  $k\in\TrName, M\in\Term$ as usual.

Transitions labelled by $\tau$ represent \emph{internal} steps of transactions, i.e., which do not need any coordination: reduction of pure terms, memory operations and thread creation (see rules in \cref{fig:semantics-tau}).
Reading a location falls into two cases: \cref{rule:read-miss}
models the reading of an unclaimed location and its effect is
to record the claim in $\Delta$, while \cref{rule:read-hit}
models the reading of a claimed location and its effect is
to merge the transactions of the current thread with that claiming the
location. Writes behave similarly.
\Cref{rule:orfirst,rule:orsecond} describe the semantics of alternative sub-transactions: if the
first one \textcode{retry}-es the second is executed discarding any effect of the first.
\Cref{rule:fork} spawns a new thread for the current transaction; a term \type{fork} $M$ can appear inside \textcode{atomic}, 
thus allowing multi-threaded open transactions, but its use inside \type{isolated}
is prevented by the type system and by the shape of \cref{rule:isolated} as well.

The remaining labels describe state transitions concerning the life-cycle of
transactions: creation, commit, abort, and restart (see rules in \cref{fig:semantics-trs-mgr}). These operations require a
coordination among threads; for instance, an abort from a thread has to be
propagated to every thread participating to the same
transaction.  This is captured in the semantics by labelling the transition
with the operation and the name of the transaction involved;
this information is used to force synchronisation of
all participants of that transaction. 
To illustrate this mechanism, we describe the commit of a transaction $k$,
namely $\ctmSt{\Sigma}{P} \xrightarrow{co\langle k\rangle} \ctmSt{\Sigma'}{P'}$.
First, by means of \cref{rule:tr-multicast-context} we split $P$ into two 
subprocesses, one of which contains all threads participating in $k$ 
(those not in $k$ cannot do a transition whose label contains $k$).
Secondly, using recursively \cref{rule:tr-multicast-commit} we single
out every thread in $k$. Finally, we apply \cref{rule:tr-commit}
provided that every thread is ready to commit, i.e., it is of the form 
$\ctmthreadtr{\textcode{return\;$M$}}{N}{t}{k}$.

Aborting a transaction works similarly, but it based on vetoes instead of an unanimous vote.
Aborts are triggered by unhandled exceptions raised by some thread, but threads react to this situation in different ways:
\begin{itemize}
	\item 
		threads forked within the transaction, in the same tree of the thread raising the exception: 
		these threads are killed (and the root thread aborted) because their creation must be discarded, as for any transactional side-effect;
	\item 
		threads from different trees which joined the transaction after it was created, due to a merging: these threads just retry their transaction, since aborting would require them to handle exceptions raised by ``foreign'' threads.
\end{itemize}
Like Haskell \libSTM \cite{hmpm:ppopp2005}, aborts leak some effects namely any transactional variable created in the aborted transaction that also occurs in the aborting exception.

Note that there are no derivation rules for $\textcode{retry}$: its meaning is
to inform the scheduler that we have reached a state where the execution is
stuck; hence the machine has to re-execute the transaction from the beginning
(or backtracking from a suitable check-point), possibly following a different execution order.

\FloatBarrier

\section{Opacity}\label{sec:opacity}
In this section we validate the formal semantics of \libOTM by proving it satisfies the \emph{opacity} correctness criterion for transactional memory \cite{gk:ppopp08}.

The opacity correctness criterion is an extension of the classical \emph{serialisability property} for databases with the additional requirement that even non-committed transactions must access consistent states. Intuitively, this property ensures that
\begin{enumerate}
\item effects of any committed transaction appear performed at a single, indivisible point during the transaction lifetime;
\item updates of any aborted transaction cannot be seen by other transactions;
\item transactions always access consistent states of the system.
\end{enumerate}

In order to formally capture these intuitive requirements let us recall some notions from \cite{gk:ppopp08}.
A \emph{history} is a sequence of \texttt{read}, \texttt{write}, \texttt{commit}, and \texttt{abort} operations\footnote{The definition in \cite{gk:ppopp08} considers finer-grained events; in particular, \texttt{read} and \texttt{write} operations are formed by \texttt{request}, \texttt{execution}, and \texttt{response} events. However in \textit{loc.~cit.}~the authors restrict to histories where \texttt{request}-\texttt{execution}-\texttt{response} sequences are not interleaved, hence we can consider the simpler \texttt{read}/\texttt{write}s events in the first place.}
ordered according to the time at which they were issued (simultaneous events are arbitrarily ordered) and such that no operation can be issued by a transaction that has already performed a \texttt{commit} or an \texttt{abort}.
A transaction $k$ is said to be in a history $H$ if the latter contains at least one operation issued by $k$.
Histories that differ only for the relative position of operations in different transactions are considered \emph{equivalent}.
Any history $H$ defines a \emph{happens-before} partial order $\prec_H$ over transactions, where
$k \prec_H k'$ iff the transaction $k$ becomes committed or aborted in $H$ before $k'$ issues its first operation.
If $\prec_H$ is total then $H$ is called \emph{sequential}.
For a history $H$, let $\mathit{complete}(H)$ be the set of histories obtained by adding either a commit or an abort for every live transaction in $H$.

We can now recall Guerraoui-Kapałka's definition\footnote{%
	The original definition requires the history $H$ to be ``legal'',
	but this notion is relevant only in presence of non-transactional
	operations which \libOTM prevents by design.
} 
of opacity {\cite[Def.~1]{gk:ppopp08}.
\begin{definition}[Opacity]
	A history $H$ is said to be \emph{opaque} if there exists a sequential
	history $S$ equivalent to some history in $\mathit{complete}(H)$ such that
	$S$ preserves the happens-before order of $H$.
\end{definition}

As shown in \cite{gk:ppopp08}, opacity corresponds to the absence of mutual dependencies between live transactions, where a dependency is created whenever a transaction reads an information written by another or depends from its outcome.
\begin{definition}[{Opacity graph \cite[Sec.~5.4]{gk:ppopp08}}]
	For a history $H$ 
	let $\ll$ be a total order on the set $T$ of all transactions in $H$.
	An \emph{opacity graph} $OPG(H,\ll)$ is a bi-coloured directed graph on $T$ such that a vertex is \emph{red} if the corresponding transaction is either running or aborted, it is \emph{black} otherwise,
	and for all vertices $k,k'\in T$, there is a edge $k \longrightarrow k'$ if any of the following holds:
	\begin{enumerate}
		\item 
			$k'$ happens-before $k$ ($k' \prec_H k$);
		\item
			$k$ reads something written by $k'$;
		\item	
			$k'$ reads some location written by $k$ and $k' \ll k$;
		\item
			$k'$ is neither running nor aborted and there are a location
			$r$ and a transaction $k''$ such that $k' \ll k''$, $k'$ writes to $r$, and $k''$ reads $r$ from $k$.
	\end{enumerate}	
    The edge is \emph{red} if the second case applies, otherwise it is black.
	The graph is said to be \emph{well-formed} if all edges from red nodes in $OPG(H,\ll)$ are also red.
\end{definition}

Let $H$ be a history and let $k$ be a transaction appearing in it. A \texttt{read} operation by $k$ is said to be \emph{local} (to $k$) whenever the previous operation by $k$ on the same location was a \texttt{write}. A \texttt{write} operation by $k$ is said to be \emph{local} (to $k$) whenever the next operation by $k$ on the same location is a \texttt{write}.
We denote by $\mathit{nonlocal}(H)$ the longest sub-history of $H$ without any local operations. A history $H$ is said \emph{locally-consistent} if every local \texttt{read} is preceded by a \texttt{write} operation that writes the read value; it is said \emph{consistent} if, additionally, whenever some $k$ reads $v$ from $r$ in $\mathit{nonlocal}(H)$ then
some $k'$ writes $v$ to $r$ in $\mathit{nonlocal}(H)$.

\begin{theorem}[{\cite[Thm.~2]{gk:ppopp08}}]
	A history $H$ is opaque if and only if
	\begin{enumerate}
		\item
			$H$ is consistent and
		\item
			there exists a total order $\ll$ on the set of transactions
			in $H$ such that $OPG(\mathit{nonlocal}(H),\ll)$ is well-formed and acyclic.
	\end{enumerate}
\end{theorem}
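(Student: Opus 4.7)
Since this statement is lifted verbatim from \cite{gk:ppopp08}, my plan is to recapitulate the original argument, which is a standard conflict-serialisability-style characterisation extended to cope with live and aborted transactions. The argument splits cleanly into the two directions of the biconditional, and the whole thing is organised around turning a witness sequential history into a total order on transactions and back.

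For the ``only if'' direction, assume $H$ is opaque, witnessed by some $H' \in \mathit{complete}(H)$ and a sequential history $S$ equivalent to $H'$ that preserves $\prec_H$. I would take $\ll$ to be the total order in which transactions appear in $S$. Consistency of $H$ falls out immediately: a non-local read in $S$ of a value $v$ from location $r$ by some $k$ must, by sequentiality and the absence of interleaving, be preceded by some committed (in $H'$) transaction that wrote $v$ to $r$, and since $H$ and $H'$ share the same non-local read/write structure the same witness sits in $\mathit{nonlocal}(H)$. Acyclicity of $OPG(\mathit{nonlocal}(H),\ll)$ is proved by case analysis on the four edge clauses: each edge $k \to k'$ corresponds to a constraint that $k'$ precedes $k$ in $S$ (happens-before, read-from, and the two ``writer/reader ordering'' conditions all reduce to this), so a cycle would contradict linearity of $\ll$. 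Well-formedness is where one must be careful: if $k$ is red (running or aborted in $H$), the only outgoing edges permitted are red ones (case 2). Clauses 1, 3, 4 for an outgoing edge from a red $k$ must be ruled out by observing that in $H'$ a running transaction is completed (aborted or committed) and an aborted transaction's writes are invisible, so the ``precedence'' edges arising from those clauses collapse into read-from edges.

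For the ``if'' direction, assume consistency together with a $\ll$ making $OPG(\mathit{nonlocal}(H),\ll)$ well-formed and acyclic. I would build $H' \in \mathit{complete}(H)$ by committing each live black transaction and aborting each live red one, then produce $S$ as a topological linearisation of the DAG $OPG(\mathit{nonlocal}(H'),\ll)$ extended so that $\prec_H$ is respected (acyclicity of this extension follows from clause 1 being baked into the graph). Equivalence between $S$ and $H'$ then requires showing: first, that the order of operations issued by the same transaction is preserved (automatic, since we permute only across transactions); second, that read values are the same as in $H'$, which uses consistency for non-local reads and local-consistency for local ones; third, that aborted transactions contribute nothing visible, which is exactly what well-formedness buys us, because any edge from an aborted or running transaction being red means no committed transaction ``after'' it relies on it.

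The main obstacle, in my view, is the ``if'' direction and specifically the interaction between well-formedness and the completion choice: one must verify that after completing live transactions, the read/write structure captured by $OPG(\mathit{nonlocal}(H'),\ll)$ remains acyclic and its linearisation still validates every non-local read. This is where Guerraoui and Kapa{\l}ka's proof uses well-formedness in an essential way, arguing that an aborted transaction's writes can be treated as ``invisible'' precisely because any read depending on them would have forced the corresponding OPG edge to be red and hence between red nodes. I would follow their bookkeeping here rather than attempt a shortcut, since any simpler argument tends to overlook the subtle case of a running transaction that reads from an aborted one without itself being doomed.
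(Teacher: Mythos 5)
The paper offers no proof of this statement: it is imported verbatim from Guerraoui and Kapa{\l}ka \cite{gk:ppopp08} (their Theorem~2) and used as a black box to derive the opacity corollary from \cref{th:noloops}. So there is no in-paper argument to compare yours against; the intended ``proof'' here is the citation itself. That said, your reconstruction has the right architecture for the original argument --- read $\ll$ off the witness sequential history $S$ in one direction, topologically sort the opacity graph and complete the live transactions in the other --- and you correctly identify the ``if'' direction as the delicate part.

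As a standalone argument, however, what you have written is a plan with two genuine soft spots. First, in the ``only if'' direction, well-formedness does not ``collapse'' as smoothly as you claim: clause~1 of the edge definition produces an edge $k \to k'$ out of \emph{any} red (running or aborted) $k$ whenever $k' \prec_H k$, and by the colouring rule that edge is black unless clause~2 also happens to hold; showing that no such black edge survives in $OPG(\mathit{nonlocal}(H),\ll)$ needs either the finer-grained event structure of the original paper or an explicit argument, and you supply neither. Second, in the ``if'' direction your completion step commits ``each live black transaction'', but under the definitions as stated every live transaction is red, so your recipe actually aborts every live transaction; the case that makes the original proof delicate is precisely the commit-pending transaction whose writes may already have been read by a committed one, and this is exactly the bookkeeping you defer to \textit{loc.~cit.} Deferring it is defensible for a cited theorem, but then your attempt reduces to the citation the paper already gives rather than an independent proof.
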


In \cite{gk:ppopp08} transactions may encapsulate several threads but cannot be merged. Therefore, in order to study opacity of \libOTM we extend the set of operations considered in \emph{loc.~cit.} with explicit merges.
Let $k,k'$ be two running transactions in the given history; when they merge, they share their threads, locations, and effects. From this perspective, $k$ is commit-pending and depends from $k'$ and hence in the opacity graph, $k$ is a red node connected to $k'$ by a red edge. Hence, merges can be equivalently expressed at the history level by sequences like:\\[1ex]
\begin{enumerate*}[label=\it(\arabic{*}),itemsep=-2pt]
	\item new $x$;
	\item $k'$ writes on $x$;
	\item $k$ reads from $x$;
	\item $k$ prepares to commit.
\end{enumerate*}\\[1ex]
These are the only dependencies found in histories generated by \libOTM.
\begin{theorem}\label{th:noloops}
	For $H$ a history describing an execution of a \libOTM program
	and a total order $\ll$, $OPG(\mathit{nonlocal}(H),\ll)$ is a forest of red
	edges where only roots may be black.
\end{theorem}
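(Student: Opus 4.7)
The plan is to analyse the operational semantics of \libOTM rule-by-rule, characterising precisely which edges can appear in $OPG(\mathit{nonlocal}(H),\ll)$ and showing they reduce to the merge-dependency trees hinted at in the paragraph before the theorem. I would structure the proof around three claims: (i) every red edge of $OPG(\mathit{nonlocal}(H),\ll)$ is produced by a merge event; (ii) each transaction has at most one outgoing red edge and the resulting graph is acyclic; (iii) a transaction can commit only if it is a root of this graph.

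For (i), I would inspect \cref{rule:read-miss,rule:read-hit,rule:write-miss,rule:write-hit}. The first reads the initial value of a freshly claimed location directly from the heap, which we treat as primordial---not attributed to any transaction in $T$---so no case-2 edge is generated. \cref{rule:read-hit} instead fires exactly when the location is claimed by another transaction $j\neq k$, and this is the step that implements the merge of $k$ into $j$; the history-level encoding assigns to it the four-event pattern \emph{new $x$; $j$ writes $x$; $k$ reads $x$; $k$ prepares to commit}, which produces the single red edge $k\to j$. A symmetric analysis of \cref{rule:write-hit} adds no further edges, and the remaining rules \cref{rule:tr-new,rule:tr-commit,rule:tr-abort-1,rule:tr-abort-2,rule:tr-abort-3} emit only new/commit/abort events, which by definition do not create case-2 dependencies.

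For (ii) and the forest structure, I would exploit the relabelling effected by \cref{rule:read-hit,rule:write-hit}: after the merge the working memory becomes $\Sigma_\Delta[k\mapsto j]$ and the process becomes $P[k\mapsto j]$, so no live thread or claim is labelled $k$ any longer. Consequently no future history event can be attributed to $k$, which is thus incident to at most one outgoing red edge. Acyclicity then follows by induction on the temporal order of merges: when $k\to j$ is added, $k$ is no longer active, so no subsequent edge can point back to $k$, and the temporal order is a topological order for the red-edge relation. For (iii), I would argue from \cref{rule:tr-commit}, whose premises require \emph{all} threads still labelled $k$ to be in return form: since an absorbed transaction has had its label rewritten throughout the process, this rule can fire only for a root of the forest; any non-root is therefore either still running (commit-pending, hence red) or aborted by \cref{rule:tr-abort-1,rule:tr-abort-2,rule:tr-abort-3} (again red).

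The hard part will be the remaining obligation of ruling out black edges, that is, case-1, case-3 and case-4 dependencies between otherwise unrelated transactions. For case 1 I would exploit the fact that histories are considered up to reorderings of operations in different transactions, so $\prec_H$ can be taken minimal and, in the absence of a merge between $k$ and $k'$, no ordering constraint needs to be imposed. Cases 3 and 4 concern write--write and write--read anti-dependencies: by (i) these can arise only between transactions that also interact via a merge, where they are subsumed by the red merge edge going in the opposite direction. Formalising this subsumption, together with reconciling the runtime view (in which absorbed transactions are renamed and identified) with the history-level view (in which original names are preserved and merges are witnessed by the encoded four-event pattern), is where most of the bookkeeping will live.
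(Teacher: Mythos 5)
Your proposal follows essentially the same route as the paper's proof: both argue by rule inspection that a claimed location is operated on by at most one running transaction at a time, so the only opacity-graph edges arise from merges; that each transaction issues at most one merge (because the relabelling $k\mapsto j$ retires $k$ from further activity); and that merge edges and their sources are red, yielding a forest of red edges in which only roots may be black. Your write-up is somewhat more explicit than the paper's three-line argument --- notably in discharging the black case-1/case-3/case-4 edges, which the paper leaves implicit --- but the decomposition and the key observations coincide.
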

\begin{proof}
By inspection of the rules it is easy to see that
\begin{enumerate*}[label=\it(\alph{*})]
\item
transactions may access only locations they claimed;
\item
claimed locations are released only on \texttt{commit}s, \texttt{abort}s and retries;
\item
transactions have to merge with any transaction holding a location they need.
\end{enumerate*}
Therefore, at any given time there is at most one running transaction issuing operations on a given location, hence~\texttt{read}s and \texttt{write}s do not create edges.
Thus edges are created only during the execution of merges and, by inspecting the above implementation, it easy to see that
\begin{enumerate*}[label=\it(\alph{*}), resume]
\item
any transaction can issue at most one merge;
\item
a transaction issuing a merge is a red node;
\item
the edge created by a merge is red.
\end{enumerate*}
Therefore, transactions form a forest made of red edges where any non-root node is red.
\qed\end{proof}

Since  a forest formed by red edges whose sources are always red is always acyclic and well-formed, we can conclude our correctness result:
\begin{corollary}[Opacity]
	\libOTM meets the opacity criterion.
\end{corollary}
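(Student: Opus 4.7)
The plan is to derive the corollary as a direct consequence of \cref{th:noloops} combined with Guerraoui--Kapa\l{}ka's characterisation \cite[Thm.~2]{gk:ppopp08}. So I need to verify the two premises of that characterisation for any history $H$ produced by the \libOTM abstract machine: namely, that $H$ is consistent and that, for some total order $\ll$ on its transactions, $OPG(\mathit{nonlocal}(H),\ll)$ is well-formed and acyclic.

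First I would dispatch the graph-theoretic condition. By \cref{th:noloops}, $OPG(\mathit{nonlocal}(H),\ll)$ is a forest, so it is acyclic by definition. Moreover, every edge in that forest is red; in particular every edge issued by a red node is red, which is exactly well-formedness. So for any total order $\ll$ the second condition is automatic---the choice of $\ll$ is irrelevant because \cref{th:noloops} holds uniformly.

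The substantive step is consistency: every non-local \textcode{read} of value $v$ from location $r$ in $H$ must be matched by a non-local \textcode{write} of $v$ to $r$ earlier in $\mathit{nonlocal}(H)$. This I would establish by inspecting the semantics of \Cref{fig:semantics-tau,fig:semantics-trs-mgr}. A \textcode{read} fires either by \cref{rule:read-miss}, in which case the value returned is $\Sigma_\Theta(r)$ and was placed there by the most recent successful \textcode{commit} touching $r$ (via $\mathsf{commit}$ in \cref{rule:tr-commit}), or by \cref{rule:read-hit}, in which case $\Sigma_\Delta(r)=(M,j)$ and the reader's transaction is merged with $j$ so that the read and the write become local to the same transaction after merging. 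Local \textcode{read}/\textcode{write} pairs disappear in $\mathit{nonlocal}(H)$, so what remains is either a committed writer feeding a reader from the heap, or a merge pattern which by \cref{th:noloops} is rendered as an explicit \textcode{new}/\textcode{write}/\textcode{read}/\textcode{commit-pending} sequence on the merged location. In both cases the non-local read value is written by some non-local write, establishing consistency. Local consistency holds trivially because \cref{rule:write-hit,rule:write-miss} update $\Sigma_\Delta$, so a subsequent read by the same transaction on the same location returns the value just written via \cref{rule:read-hit}.

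The main obstacle I anticipate is making the merge-as-history translation fully precise: since the abstract machine merges transactions silently (by rewriting $k\mapsto j$ in $\Delta$ and in thread identifiers), I have to fix the canonical encoding of a merge step as the four-operation sequence announced just before \cref{th:noloops} and argue that this encoding does not invalidate either consistency or the forest structure. Once that bookkeeping is settled, the corollary follows by applying \cite[Thm.~2]{gk:ppopp08}. \qed
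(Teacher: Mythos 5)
Your proposal follows the paper's own route: the corollary is obtained by feeding \cref{th:noloops} into Guerraoui--Kapa\l{}ka's characterisation \cite[Thm.~2]{gk:ppopp08}, with acyclicity coming from the forest structure and well-formedness from the fact that every edge in that forest is red. The only difference is that you also explicitly discharge the consistency premise of that theorem by inspecting the read/write/commit rules, a step the paper leaves implicit; this is a welcome strengthening rather than a divergence, and the bookkeeping you flag about encoding silent merges as explicit history operations is exactly what the paper handles informally in the paragraph preceding \cref{th:noloops}.
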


\section{Conclusions and future work}
\label{sec:conclusions}
In this paper we have presented OTM, a programming model supporting  interactions between composable memory transactions. This model separates isolated transactions from non-isolated ones, still guaranteeing atomicity; the latter can interact by accessing to shared variables. Consistency is ensured by transparently \emph{merging} interacting transactions at runtime.  
We have given a formal semantics for OTM, and proved that this model satisfies the important \emph{opacity} criterion.  

As future work, it would be interesting to add some heuristics to better handle \type{retry} events.  
Currently, a \type{retry} restarts all threads participating to the transaction; a more efficient implementation would keep track of the \emph{working set} of each thread, and at a \type{retry} we need to restart only the threads whose working sets have non-empty intersection with that being restarted. 
Another optimization is to implement transactions and OTVars directly in the runtime, akin the implementation of \libSTM in the Glasgow Haskell Compiler \cite{hmpm:ppopp2005}. 

We have presented OTM within Haskell (especially to leverage its type system), but this model is general and can be applied to other languages. A possible future work is to port this model to an imperative object oriented language, such as Java or C++; however, like other TM implementations, we expect that this extension will require some changes in the compiler and/or the runtime. 

This work builds on the calculus with shared memory and open transactions described in \cite{mpt:coord15}. In \textit{loc.~cit.}~this model is shown to be expressive enough to represent  $TCCS^m$ \cite{ksh:fossacs2014}, a variant of the Calculus of Communicating Systems with transactional synchronization.
Being based on CCS, communication in $TCCS^m$ is synchronous;
however, nowadays asynchronous models play an important r\^ole (see
actors, event-driven programming, etc.), so it may be interesting
to generalize the discussion so as to consider also this case, e.g.~by
defining an a calculus for event-driven models or an actor-based calculus with open transactions. 
Such a calculus can be quite useful also for modelling speculative reasoning for cooperating systems \cite{ma2010:speculative,mmp:eceast2014,mpm:gcm14w,mmp:dais14} or study distributed interacting transactions in serverles-computing \cite{jpbg:19,odcpm:20,gglmpx:19}.
A local version of actor-based open transactions can be
implemented in \libOTM using lock-free data
structures (e.g., message queues) in shared transactional memory.

\end{document}